%% file: main.tex
\documentclass[letterpaper, 10 pt, conference]{ieeeconf}

\input{preamble}

\title{{\color{blue}Safe and Non-Conservative Obstacle Avoidance under State Estimation Error} with Learning-Based Measurement-Robust Control Barrier Functions}
\title{Learning-Based Measurement-Robust Control Barrier Functions for\\ Obstacle Avoidance under State Estimation Error}
\author{Anonymous Authors}
\iftrue
\author{Nicholas Rober, Yixuan Jia, and Jonathan P.\ How%
\thanks{Submitted 01/12/2026. Research supported by Aurora Flight Sciences.}%
\thanks{N.\ Rober, Y.\ Jia, and J.\ How are with the Aerospace Controls Lab, Department of Aeronautics and Astronautics, Massachusetts Institute of Technology, Cambridge, MA 02319 USA (e-mail: nrober@mit.edu, ameredit@mit.edu, jhow@mit.edu).}%
}
\fi
\begin{document}

\newif\ifarxiv
\arxivfalse

\maketitle
\thispagestyle{empty}
\iftrue
\setlength{\skip\footins}{8pt}
\begingroup
\renewcommand{\thefootnote}{}
\footnotetext{Submitted August $20^\mathrm{th}$, 2026.
This material is based upon work supported by the Naval Information Warfare Center (NIWC) Atlantic under Contract No. N6523623C8011.
The views, opinions, and/or findings expressed are those of the author(s) and should not be interpreted as representing the official views or policies of the Department of Defense or the U.S. Government.}
% \footnotetext{Code: \url{https://github.com/mit-acl/guardian}}
% \vspace{-6pt}
\endgroup
\fi

\begin{abstract}
Safety filters are an effective tool for enforcing constraints in safety-critical systems, but most existing methods assume perfect state information, which is rarely available in practice.
Recent work has begun to close this gap by developing filtering mechanisms that are robust to state estimation error, but these methods can still exhibit safety violations or overly conservative behavior as estimation error grows. Focusing on obstacle avoidance, we develop two new control barrier function (CBF) formulations: drift-measurement-robust (DMR)-CBFs and neural measurement-robust (NMR)-CBFs. The DMR-CBF augments the standard CBF condition with an inner optimization over the worst-case uncertainty in the drift dynamics, improving robustness to estimation error. This DMR-CBF then supervises a pretraining phase for the NMR-CBF, which replaces the inner optimization with a learned term. The NMR-CBF is subsequently finetuned through differentiable trajectory rollouts, 
yielding a filter that achieves empirical safety comparable to the DMR-CBF while reducing both conservativeness and computational cost.
We provide theoretical analysis of the DMR-CBF along with numerical results on a planar double integrator and a 12D quadrotor, where both proposed approaches prevent collisions while other robust methods either fail or are overly conservative. Finally, we deployed the NMR-CBF on a Unitree Go2, enabling successful navigation of an obstacle field under odometry errors that caused a standard CBF to collide.
\end{abstract}

\input{introduction}
\input{preliminaries}
\input{approach}
\input{results}
\input{conclusion}

\bibliographystyle{aiaa}
% \balance
\bibliography{refs}
\end{document}

%% file: preamble.tex
\usepackage[utf8]{inputenc}
\usepackage{amsmath}
\usepackage{amssymb}
\usepackage{graphicx,balance}
\usepackage{pifont}
\usepackage{bm}
\usepackage{dsfont}
\usepackage{subcaption}
\usepackage{booktabs}
\usepackage{multirow}
\usepackage{makecell}
\usepackage{siunitx}

\usepackage{amsfonts}
\usepackage{url}
\usepackage[pdftex,plainpages = false, colorlinks=true, linkcolor=black, citecolor = black, urlcolor = blue,pagebackref=false,hypertexnames=false, plainpages=false, pdfpagelabels]{hyperref}
\usepackage{balance}
\usepackage[capitalize]{cleveref}
\usepackage[sort,compress]{cite}

\newtheorem{theorem}{Theorem}[section]

\newtheorem{proposition}[theorem]{Proposition}
\crefname{section}{Section}{Sections}
\crefname{theorem}{Theorem}{Theorems}
\crefname{lemma}{Lemma}{Lemmas}
\crefname{table}{Table}{Tables}
\crefformat{equation}{(#2#1#3)}
\crefname{algocf}{Algorithm}{Algorithms}
\Crefname{algocf}{Algorithm}{Algorithms}
\crefname{ALC@unique}{Line}{Lines}

\newtheorem{lem}{Lemma}

\newtheorem{prop}{Proposition}

\crefname{asm}{Assumption}{Assumptions}
\Crefname{asm}{Assumption}{Assumptions}

\newcommand{\x}{\bm{x}}
\newcommand{\s}{\bm{s}}

\renewcommand{\u}{\bm{u}}
\newcommand{\errs}{\mathcal{E}_{\hat{\x}}}
\newcommand{\uncs}{\bar{\mathcal{X}}}

\newcommand{\y}{y}
\newcommand{\e}{\bm{e}}

\newcommand{\p}{\bm{p}}
\renewcommand{\v}{\bm{v}}

\usepackage{accents}

\usepackage{paralist}

\crefformat{chapter}{\S#2#1#3}
\crefmultiformat{chapter}{\S\S#2#1#3}{and~#2#1#3}{, #2#1#3}{, and~#2#1#3}
\crefformat{section}{\S#2#1#3}
\crefmultiformat{section}{\S\S#2#1#3}{and~#2#1#3}{, #2#1#3}{, and~#2#1#3}

\usepackage{algorithm,algorithmic}

\usepackage{tikz}
\usetikzlibrary{shapes,arrows,arrows.meta,positioning,calc}

\newif\ifshrink
\shrinktrue
\newcommand{\gdn}{GUARDIAN}
\definecolor{review}{RGB}{0, 0, 0}
\definecolor{internal}{RGB}{0, 0, 0}

\newcommand{\ana}{DMR-CBF}

%% file: introduction.tex
\section{Introduction}
\label{sec:introduction}
% The modified HJ reachability formulation presented in \cref{chap:guardian} provides strong guarantees for NFLs with learned perception modules, i.e., ONFLs.
% However, as described in \cref{guardian:sec:approach}, these guarantees are dependent on a set of assumptions that require there to be a uniformly safe control direction over the entire state uncertainty set.
% As will be demonstrated in this chapter, this limitation is not strictly theoretical; \gdn\ allows real safety violations when its assumptions are not satisfied.
% % Considering the results in \cref{guardian:sec:comparisons} demonstrating that existing methods, though not reliant on the same assumptions as \gdn, struggle under the high state estimation error regime that may arise as a result of adversarial attacks.
% The results in \cref{guardian:sec:comparisons} further demonstrate that existing methods, though not reliant on the same assumptions as \gdn, struggle under the high state estimation error regime that may arise as a result of adversarial attacks.
% This leaves safety for ONFLs in situations like obstacle avoidance, where \gdn's assumptions do not hold, as an open area of research.

As autonomous systems become increasingly prevalent in safety-critical settings such as automotive and aerospace applications~\cite{bojarski2016end,kaufmann2023champion}, ensuring that they respect the constraints of their environments has become essential to protecting the systems themselves, the people around them, and their surroundings.
% As autonomous systems see growing use in safety-critical settings such as automotive, medical, and aerospace applications~\cite{}, ensuring that they respect the constraints of their environments has become essential to protecting the systems themselves, the people around them, and their surroundings.
Safety filtering is a powerful tool for this task, modifying the commands of a potentially unsafe nominal controller to prevent violations of the system's safety constraints.
Common safety filtering approaches include control barrier functions (CBFs)~\cite{ames2016control} and Hamilton-Jacobi (HJ) reachability analysis~\cite{bansal_hamilton-jacobi_2017}, which differ in their underlying mechanisms but share the goal of minimally adjusting an unsafe nominal input so that the filtered input satisfies the safety constraints.
While effective in their intended settings, most safety filters assume access to perfect state information, which is rarely available in practice.
Real systems instead rely on a state estimate produced by an estimation module, such as a Kalman filter~\cite{kalman1960new} or a learned perception model~\cite{katz_verification_2022}.
Such estimates inevitably carry error, which can be substantial in the case of learned models susceptible to adversarial attacks~\cite{szegedy_intriguing_2014}.
Thus, there is a need for filtering mechanisms that preserve safety despite (potentially large) state estimation error, such as the drift between the true and estimated state visible in \cref{fig:go2_nmr_composite}.

\begin{figure}[tp]
    \centering
    \includegraphics[width=1.0\columnwidth]{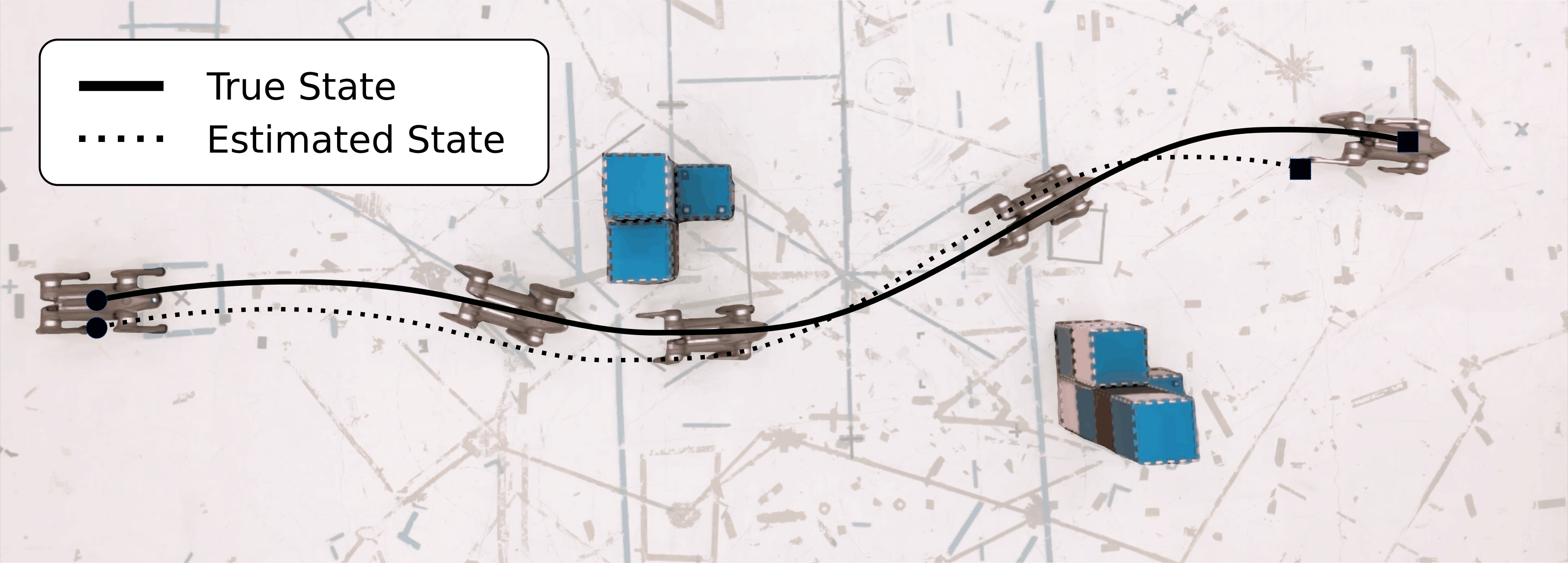}
    \caption{Our approach (NMR-CBF) enables safe and performant obstacle avoidance despite state estimation error.}
    \label{fig:go2_nmr_composite}
    \ifshrink
    \vspace{-16pt}
    \fi
\end{figure}

In response to this issue, several recent works have developed safety filters that are robust to state estimation error~\cite{tan2024safety,tan2025secure,agrawal_safe_2023,dean_guaranteeing_2021,nanayakkara_safety_2025,das_safe_2025,tan2026duality,rober2026guardian}.
Early efforts addressed sensor-spoofing attacks on linear systems~\cite{tan2024safety,tan2025secure}, and \cite{agrawal_safe_2023} developed a filtering mechanism for control-affine nonlinear systems whose estimation modules have a known dynamics model.
More generally, for systems with only a bounded estimation error, a variety of methods have emerged that share a common structure despite their differing mechanisms: each modifies the nominal safety condition to account for the discrepancy between the true state and the estimate, and how they do so separates them into two broad paradigms.
\textit{Point-based} methods (MR-CBFs~\cite{dean_guaranteeing_2021}, R-CBFs~\cite{nanayakkara_safety_2025}, and R-CBF-QPs~\cite{das_safe_2025}) evaluate the safety condition at the state estimate and absorb uncertainty through an added robustness term that tightens the constraint.
\textit{Set-based} methods (the Duality CBF~\cite{tan2026duality} and \gdn~\cite{rober2026guardian}) instead find a control that is safe over the entire state uncertainty set.
As detailed in \cref{sec:prelims:prev}, each paradigm carries a characteristic failure mode in obstacle avoidance.
Point-based methods tend to permit collisions as estimation error grows, while set-based methods become overly conservative when it is ambiguous which way to steer around an obstacle.
% Point-based methods tend to permit collisions as state estimation error grows and set-based methods tend to be overly conservative when there is uncertainty about which direction to go around the obstacle.
Set-based methods additionally scale poorly with state dimension, limiting their application to smaller systems.

To address these limitations, we propose two CBF formulations that combine elements of point- and set-based robustness. The DMR-CBF provides a certifiable formulation, while the learned NMR-CBF reduces its conservativeness and computational cost through supervised pretraining and differentiable trajectory fine-tuning. 

The contributions of this paper are therefore as follows:
\begin{itemize}
    \item {\color{internal}\ana: a CBF formulation with an \emph{a posteriori} certificate that protects systems subject to estimation error without the conservativeness of set-based filters or the collision-proneness of point-based filters}.
    % \item NMR-CBF: a CBF with a learned robustness term, pretrained on \ana-generated data and fine-tuned via differentiable trajectory rollouts, that matches or beats the task performance of the baselines while allowing the fewest safety violations.
    %\item NMR-CBF: a CBF formulation with a learned robustness term reducing conservativeness and computational cost relative to the \ana\ while {\color{magenta} empirically} preserving its safety.
    \item NMR-CBF: a CBF formulation with a learned robustness term reducing conservativeness and computational cost relative to the \ana\ while achieving comparable empirical safety. 
    % {\color{red} ok with this wording? seems to be better aligned with what we show}
    %\item NMR-CBF: a CBF formulation with a learned robustness term reducing conservativeness and computational cost relative to the \ana\ while preserving safety.
    % \item Numerical experiments demonstrating how our methods improve [hmmmm] baselines across obstacle-avoidance scenarios and varying levels of state estimation error.
    \item Numerical experiments across obstacle-avoidance scenarios showing our methods achieve both safety and performance whereas the baselines are either safe but conservative or performant but collision-prone.
    \item Hardware experiments demonstrating the NMR-CBF (\cref{fig:go2_nmr_composite}) on a Unitree Go2 under onboard estimate drift that causes a standard CBF to collide with an obstacle.
    %\item Hardware experiments where our NMR-CBF (\cref{fig:go2_nmr_composite}) protects a Unitree Go2 against onboard estimate drift that causes a standard CBF to collide with an obstacle.
    % \item \ana: a theoretically certifiable CBF formulation that overcomes the limitations of the set- and point-based measurement-robust safety filters.
    % \item NMR-CBF: a CBF formulation with a learned robustness term that is trained to be less conservative than the \ana, matching or beating the performance of all baselines while allowing the fewest safety violations.
    % \item Extensive numerical experiments comparing the proposed methods with the baselines demonstrate our approaches allow the fewest safety violations.
    % \item Hardware experiments demonstrate our NMR-CBF protecting a Unitree Go2 against onboard measurement drift that causes a standard CBF to crash into an obstacle.
\end{itemize}

%% file: preliminaries.tex
\section{Preliminaries}
\label{sec:prelims}
\textit{Notation:} The $L_2$ and $L_\infty$ norms of $\x\in\mathbb{R}^n$ are denoted as $\|\x\|$ and $\|\x\|_\infty$, respectively.
Unless otherwise specified, inequalities and absolute values operate element-wise for vectors, i.e., for $\x, \y \in \mathbb{R}^n$, $|\x| \leq \y\ \implies |\x_i| \leq \y_i\ \forall i\in\{1\ldots n\}$.
For functions $f: \mathbb{R}^n \rightarrow \mathbb{R}^n$ and $h: \mathbb{R}^n \rightarrow \mathbb{R}$ and $\x\in\mathbb{R}^n$, $L_fh(\x)$ denotes the Lie derivative of $h$ along $f$ at $\x$, i.e., $\nabla h(\x) f(\x)$.
A function $\alpha: \mathbb{R} \rightarrow \mathbb{R}$ is an extended class $\mathcal{K}$ function if it is strictly monotonically increasing and $\alpha(0) = 0$. 
For two sets, $\mathcal{X}, \mathcal{Y} \subseteq \mathbb{R}^n$, $\mathcal{X} \oplus \mathcal{Y}$ denotes their Minkowski sum.

\subsection{System Dynamics}
\label{sec:prelims:dynamics}
Consider the nonlinear system with state vector $\x \in \mathcal{X} \subseteq \mathbb{R}^{n}$ and control-affine dynamics
\begin{equation}
\label{eqn:dynamics}
    \dot{\x} = f(\x) +  g(\x) \u
\end{equation}
where the drift dynamics $f(\x):\mathbb{R}^{n} \rightarrow \mathbb{R}^{n}$ and input matrix $g(\x):\mathbb{R}^{n} \rightarrow \mathbb{R}^{n\times m}$ are locally Lipschitz continuous and $\u \in \mathcal{U} \subset \mathbb{R}^{m}$ is the control input.
The true state of the system is unknown, but we assume there is a known state estimate $\hat{\x}$ with a known error bound $\e_{\hat{x}}: \mathcal{X} \times \mathbb{R}_{\geq0} \rightarrow \mathbb{R}_{\geq0}^{n}$, i.e., $|\hat{\x} - \x| \leq \e_{\hat{x}}$.
Such a bound can be obtained via, e.g, known estimator properties or verification tools as described in \cite{rober2026guardian}, and can be used to construct the error set $\errs \triangleq \{\s \in \mathbb{R}^n \ | \ |\s| \leq \e_{\hat{x}}\}$ and the state uncertainty set $\uncs \triangleq \{\hat{\x}\} \oplus \errs$ centered at $\hat{\x}$ with element-wise radii defined by $\e_{\hat{x}}$.
% The system is subject to safety constraints that can be encoded as 
% The system has a Lipschitz continuous controller $\u = k_d(\hat{\x})$, so the closed loop system takes the form $\dot{\x} = f(\x) +  g(\x) k_d(\hat{\x})$.

\subsection{Control Barrier Functions~\cite{ames2016control}}
\label{sec:prelims:cbfs}
To formally define safety in the context of CBFs, we consider a set $\mathcal{C} = \{\x \in \mathcal{X} \ | \ h(\x) \geq 0 \}$, where $h(\x): \mathcal{X} \rightarrow \mathbb{R}$ is a continuously differentiable function satisfying $\frac{d h}{d \x} \not= 0$ when $h(\x)=0$ and where $\mathcal{C}$ is non-empty and has no isolated points.
Safety can then be considered with respect to $\mathcal{C}$ via a forward invariance condition: if an initial state $\x_0 \triangleq \x(0) \in \mathcal{C}$, then $\forall t > 0$, the solution to \cref{eqn:dynamics} at time $t$, i.e.,  $\x(t)$, satisfies $\x(t) \in \mathcal{C}$, indicating that the system is safe with respect to $\mathcal{C}$.
The function $h$ is then considered a control barrier function (CBF) if there exists an extended class $\mathcal{K}$ function $\alpha(\cdot)$ such that 
\begin{equation}
\label{eqn:cbf_condition}
    \sup_{\u \in \mathcal{U}} L_fh(\x) + L_gh(\x)\u + \alpha(h(\x)) \geq 0.
\end{equation}
In the setting where $\x$ is known, a safe controller can then be constructed as a quadratic program (QP)
\begin{align}
\label{eqn:cbf}
% \tag{CBF}
    k_{\mathrm{cbf}}(\x) & = \mathrm{arg}\min_{\u \in \mathcal{U}} \| \u - k_d(\x) \|^2 \\
    & \mathrm{s.t.}\ L_fh(\x) + L_gh(\x)\u + \alpha(h(\x)) \geq 0 \label{eqn:cbf_constraint}
\end{align}
that minimizes intervention against a locally Lipschitz nominal controller $k_d : \mathcal{X} \rightarrow \mathcal{U}$ while ensuring the satisfaction of \cref{eqn:cbf_condition}.
For the rest of this paper, we consider the case where the state estimate $\hat{\x}$ is used in place of the unknown true state $\x$, but \cref{eqn:cbf_condition} must still be maintained.

\subsection{Safety Filtering Under State Estimation Error}
\label{sec:prelims:prev}
% Safety filtering is a more general class of formulations  
Given that the true state $\x$ is rarely available and we must typically rely on the state estimate $\hat{\x}$, several recent works have addressed the problem of safety filtering under state estimation error.
GUARDIAN~\cite{rober2026guardian} is unique in its development of a modified Hamilton-Jacobi (HJ) reachability formulation, but the primary thread has focused on CBF-based formulations that solve the optimization \cref{eqn:cbf} but use a modified constraint \cref{eqn:cbf_constraint}.
The most prevalent strategy, referred to here as a \textit{point-based} robust strategy, is to formulate a constraint that is dependent on the state estimate, but includes an additional tightening term $\rho(\cdot)$, so that \cref{eqn:cbf_constraint} is replaced by 
\begin{equation}
    L_fh(\hat{\x}) + L_gh(\hat{\x})\u + \alpha(h(\hat{\x})) - \rho(\cdot) \geq 0.
\end{equation}
Examples of safety filters that use a constraint of this form include Measurement-Robust (MR)-CBFs~\cite{dean_guaranteeing_2021}, Robust (R)-CBFs~\cite{nanayakkara_safety_2025} and R-CBF-QPs~\cite{das_safe_2025}, each of which include different robustifying terms $\rho(\cdot)$.
An alternative strategy recently proposed by \cite{tan2026duality,rober2026guardian} is to instead use a \textit{set-based} robust strategy, replacing \cref{eqn:cbf_constraint} with something of the form
\begin{equation}
    L_fh(\s) + L_gh(\s)\u + \alpha(h(\s)) \geq 0\quad \forall \s \in \uncs.
\end{equation}
Duality CBFs~\cite{tan2026duality} accomplish this using a dual formulation of \cref{eqn:cbf} and developing a corresponding constraint, and \gdn~\cite{rober2026guardian} employs a philosophically similar approach using HJ reachability.

Each of these formulations suffers from limitations that align with its categorization as either point-based or set-based.
% Each of these existing formulations suffer from issues that manifest in similar ways by categorization as either point-based or set-based.
First considering the point-based approaches, the MR-CBF provides strong guarantees for the original safe set $\mathcal{C}$, but the existence of a standard CBF does not guarantee the existence of a valid MR-CBF, and they may be difficult or impossible to find, especially as the state-estimation error increases.
Alternatively, R-CBFs can always be formed from an existing CBF, but rely on two hyper-parameters, $\gamma_1,\ \gamma_2\in \mathbb{R}_{>0}$ that do not adapt to varying levels of state estimation error, leading to possible safety violations.
R-CBF-QPs adaptively tune $\gamma_1$ and $\gamma_2$, but, past a certain level of state estimation error, are only guaranteed safe for an inflated version of $\mathcal{C}$.

Regarding the set-based approaches, the Duality CBF~\cite{tan2026duality} is guaranteed safe if its dual optimization is well-formulated, but its need to find a control that uniformly preserves safety over the entire state uncertainty set can result in overly conservative obstacle avoidance, as will be shown in \cref{sec:results}.
\gdn\ similarly requires a uniform safe control direction over the state uncertainty set, but additionally requires a scalar control input, so it is not a valid solution to many obstacle avoidance problems.
Moreover, as demonstrated in \cref{sec:results}, both the Duality CBF and \gdn\ approaches suffer from scalability issues, limiting their practical application to smaller systems when compared with the point-based filtering approaches.
\iffalse
Thus, the distinction between point-based and set-based robust safety filtering mechanisms has real implications for their respective performance in obstacle avoidance: point-based approaches may violate safety, especially as state estimation error increases, and set-based approaches are overly conservative and scale worse with state dimension.
The problem addressed in this chapter is then to find a filtering mechanism that resolves these issues by allowing a system to safely avoid obstacles in the presence of large estimation errors while not being overly conservative.
\fi
Thus, we seek a filter that maintains safety under large estimation errors without the conservativeness and scalability limitations of set-based approaches.

%% file: approach.tex
\section{Approach: \ana s and NMR-CBFs}
\label{sec:approach}

%%%%%% Need to verify correctness - claude helped construct this section %%%%%%
%=====================================================================
%  PSHR-CBF: Theory section
%  ------------------------------------------------------------------
%  Assumed theorem environments (declare in preamble if not present):
%     \newtheorem{lemma}{Lemma}
%     \newtheorem{proposition}{Proposition}
%     \newtheorem{theorem}{Theorem}
%     \newtheorem{corollary}{Corollary}
%     \newtheorem{definition}{Definition}
%     \newtheorem{remark}{Remark}
%  (proof / cref / bm / amsmath assumed available)
%
%  Notation hooks into your existing labels:
%     \cref{eqn:cbf_condition}, \cref{eqn:cbf}, \cref{eqn:duality-cbf},
%     \cref{ex:dubins}, \cref{fig:point_vs_set}, \cref{sec:results}.
%  Uses your macros: \x \s \u \uncs \errs \hat{\x}.
%=====================================================================

% \label{nmrcbfss:sec:dmr}
In this section, we propose two CBF formulations for obstacle avoidance under state estimation uncertainty. \Cref{sec:approach:dmrcbfs} introduces the {drift–measurement-robust} CBF (\ana), which finds the worst-case drift term in the standard CBF condition~\cref{eqn:cbf_constraint} and uses it to find a safe control. Building on the \ana, \cref{sec:approach:nmrcbf} introduces the {neural measurement-robust} CBF (NMR-CBF), which learns to imitate the \ana\ during pretraining and is subsequently fine-tuned using trajectory rollouts. As shown in \cref{sec:results} and \cref{sec:hardware}, the resulting filter is less conservative and empirically safe, despite lacking the guarantees of its \ana\ teacher.

\subsection{Drift--Measurement Robust CBFs}
\label{sec:approach:dmrcbfs}
The \ana\ uses the modified constraint
\begin{equation}
\label{eqn:dmr}
% \tag{PSHR-CBF}
    \min_{\s \in \uncs}\big[L_fh(\s) + \alpha(h(\s))\big] + L_gh(\hat{\x})\,\u \;\geq\; 0 .
\end{equation}
Conceptually, {\color{internal}the \ana\ strikes a balance between the set- and point-based robust methods: } it is set-robust in the autonomous (drift) term $L_fh(\cdot) + \alpha(h(\cdot))$, but {\color{internal} point-robust} in the control term $L_gh$.
% By taking the worst case over the uncertainty set $\uncs$ in everything that does not depend on $\u$, while evaluating the control coefficient $L_gh$ at the estimate $\hat{\x}$ alone
% Conceptually, the \ana\ is set-robust in the autonomous (drift) term $L_fh(\cdot) + \alpha(h(\cdot))$, but nominal in the control term $L_gh$: it takes the worst case over the uncertainty set $\uncs$ in everything that does not depend on $\u$, while evaluating the control coefficient $L_gh$ at the estimate $\hat{\x}$ alone.
%{\color{internal}By taking the worst-case drift term over the uncertainty set $\uncs$, the \ana\ incorporates a measure of robustness similar to the state-of-the-art Duality CBF~\cite{tan2026duality}.
%However, by evaluating the control coefficient $L_gh$ at the estimate $\hat{\x}$ alone, it avoids issues associated with determining a control that is uniformly safe over all of $\uncs$. This allows for the \ana\ to be robust to estimation error while avoiding the indecisiveness common to set-based approaches, as will be shown in \cref{sec:results}.
%}
By taking the worst-case drift term over the uncertainty set $\uncs$, the \ana\ accounts for uncertainty in the uncontrolled evolution of the safety margin, similarly to the state-of-the-art Duality CBF~\cite{tan2026duality}. In contrast, requiring the control term to be robust over all of $\uncs$ can force a single control action to accommodate conflicting control directions associated with different possible states, leading to the indecisiveness characteristic of set-based approaches. The \ana\ therefore evaluates $L_gh$ only at the estimate $\hat{\x}$, retaining nominal control authority while robustifying the portion of the CBF condition that is independent of the chosen control. This construction seeks to balance robustness to estimation error against the conservativeness of fully set-based control, as demonstrated in \cref{sec:results}.

% {\color{red} seems like we need to commentary here about why this is a good idea}
% This subsection makes that statement precise, characterizes exactly when \cref{eqn:dmr} preserves safety of the true state, shows that the residual failure mode is structurally confined to a thin set, and turns that characterization into a certified controller.

To formally analyze the properties of the \ana, we first define the autonomous and control terms as
\begin{equation}
\label{eqn:ab_def}
    a(\s) \triangleq L_fh(\s) + \alpha(h(\s)),
    \quad
    b(\s) \triangleq L_gh(\s) \in \mathbb{R}^{1\times m},
\end{equation}
so that the true CBF condition required along the real trajectory is $a(\x) + b(\x)\u \geq 0$.
We define the worst-case autonomous value as
\begin{equation}
\label{eqn:amin_def}
    \underline{a} \triangleq \min_{\s\in\uncs} a(\s)
    = \min_{\s\in\uncs}\big[L_fh(\s) + \alpha(h(\s))\big].
\end{equation}
With this notation, \cref{eqn:dmr} reads $\underline{a} + b(\hat{\x})\u \geq 0$.
Recall $\hat{\x}\in\uncs$ and that $\uncs = \{\hat{\x}\} \oplus \errs$ is a box, hence convex and connected, so the true state satisfies $\x\in\uncs$ by construction of the error bound.
% All results extend to HOCBFs by substituting $\psi_{r-1}$ for $h$, exactly as in \cref{sec:prelims:hocbfs}.

The true CBF condition admits the following decomposition:
%Given that we want to ensure the CBF condition is true for the true state, i.e., $a(\x) + b(\x)\u \geq 0$, we rely on the following decomposition:
\begin{lem}[Decomposition]
\label{lem:decomp}
For every $\u\in\mathcal{U}$ and every $\x\in\mathcal{X}$,
\begin{equation}
\label{eqn:decomp}
\begin{split}
    a(\x) + & b(\x)\u
    = \\ & \underbrace{\big[\underline{a} + b(\hat{\x})\u\big]}_{\textnormal{DMR slack}}
    + \underbrace{\big[a(\x) - \underline{a}\big]}_{\textnormal{drift gap}}
    + \underbrace{\big[b(\x) - b(\hat{\x})\big]\u}_{\Delta_g(\x,\u)}.
\end{split}
\end{equation}
\end{lem}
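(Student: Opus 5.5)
The identity in \cref{lem:decomp} is purely algebraic, so the plan is to verify it by direct expansion rather than by any analytic argument. Fix arbitrary $\u\in\mathcal{U}$ and $\x\in\mathcal{X}$. First I will check that every quantity appearing in \cref{eqn:decomp} is a well-defined finite real number. The terms $a(\x)$, $b(\x)$, and $b(\hat{\x})$ are finite because $h$ is continuously differentiable, $f$ and $g$ are locally Lipschitz (hence continuous), and $\alpha$ is a real-valued function. The scalar $\underline{a}$ from \cref{eqn:amin_def} is finite because $a(\cdot)$ is continuous and $\uncs$ is a closed, bounded box, hence compact, so the minimum is attained. This matters only to rule out manipulating $\pm\infty$; the identity does not need the minimizer itself.

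Next I will sum the three bracketed terms on the right-hand side of \cref{eqn:decomp}:
\begin{equation*}
\big[\underline{a} + b(\hat{\x})\u\big] + \big[a(\x) - \underline{a}\big] + \big[b(\x) - b(\hat{\x})\big]\u .
\end{equation*}
Using distributivity of the row vector $b(\x)-b(\hat{\x})\in\mathbb{R}^{1\times m}$ over $\u\in\mathbb{R}^m$, the scalar $\underline{a}$ cancels between the first two brackets. Likewise, $b(\hat{\x})\u$ cancels between the first and third brackets. What remains is $a(\x) + b(\x)\u$, which is the left-hand side. Nothing in this computation uses $\x\in\uncs$, the choice of $\alpha$, or the specific form of $\uncs$. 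This is why the lemma holds for every $\x\in\mathcal{X}$, not only for states consistent with the estimate.

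There is no real obstacle here; the only step needing care is the finiteness of $\underline{a}$. The lemma is an add-and-subtract device whose value appears later. When $\x\in\uncs$, the drift gap $a(\x)-\underline{a}$ is nonnegative by definition of the minimum. Therefore, once the \ana\ constraint \cref{eqn:dmr} makes the DMR slack nonnegative, the only term that can violate the true CBF condition is $\Delta_g(\x,\u)$. I will state this remark after the proof to motivate the subsequent analysis.
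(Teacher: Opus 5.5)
Your proof is correct and matches the paper's argument: expand the right-hand side and observe that $\underline{a}$ and $b(\hat{\x})\u$ cancel, leaving $a(\x)+b(\x)\u$. The finiteness check on $\underline{a}$ is something the paper leaves implicit; it is harmless, though it quietly relies on continuity of $\alpha$, which is standard for extended class-$\mathcal{K}$ functions but not stated in the paper's definition.
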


\begin{proof}
Expand the right-hand side: the $\underline{a}$ and $b(\hat{\x})\u$ terms cancel between the first and remaining brackets, leaving $a(\x) + b(\x)\u$.
\end{proof}

{\color{internal}This decomposition reveals an important safety property of the \ana, stated formally in \cref{cor:pointwise}.}

\begin{lem}[Pointwise lower bound]
\label{cor:pointwise}
If $\u$ satisfies the \ana\ constraint \cref{eqn:dmr} and $\x\in\uncs$, then
\begin{equation}
\label{eqn:pointwise}
    a(\x) + b(\x)\u \;\geq\; \Delta_g(\x,\u) = \big[L_gh(\x) - L_gh(\hat{\x})\big]\u.
\end{equation}
\vspace{-0.8\baselineskip}
\end{lem}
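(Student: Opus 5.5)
The argument applies \cref{lem:decomp} at the true state $\x$ and bounds the first two brackets on its right-hand side from below by zero. That leaves only the control-mismatch term $\Delta_g(\x,\u)$, which is exactly the claimed lower bound \cref{eqn:pointwise}.

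First, by hypothesis $\u$ satisfies \cref{eqn:dmr}, which in the notation of \cref{eqn:amin_def} reads $\underline{a} + b(\hat{\x})\u \geq 0$. So the DMR slack bracket in \cref{eqn:decomp} is nonnegative.

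Second, $\underline{a}$ is the minimum of $a$ over $\uncs$ and $\x\in\uncs$ by assumption, so $a(\x) \geq \underline{a}$ and the drift-gap bracket is nonnegative. The minimum in \cref{eqn:amin_def} is well defined and attained. The reason is that $\uncs$ is a closed, bounded box, hence compact. Also, $a = L_fh + \alpha\circ h$ is continuous, since $h$ is $C^1$, $f$ is locally Lipschitz, and we take $\alpha$ continuous as is standard for extended class $\mathcal{K}$ functions. Even without attainment, reading $\underline{a}$ as an infimum gives the same inequality $a(\x)\ge\underline{a}$.

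Substituting the two nonnegative brackets into \cref{eqn:decomp} gives
\begin{equation*}
a(\x)+b(\x)\u \;\geq\; 0 + 0 + \big[b(\x)-b(\hat{\x})\big]\u = \big[L_gh(\x)-L_gh(\hat{\x})\big]\u,
\end{equation*}
using the definition $b = L_gh$ from \cref{eqn:ab_def}.

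There is no real obstacle here. The only point needing care is the attainment of the minimum, which is handled by the compactness argument above.
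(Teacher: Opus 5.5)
Your proof is correct and follows the paper's argument: substitute into the decomposition of \cref{lem:decomp}, note that the DMR slack is nonnegative by \cref{eqn:dmr} and the drift gap is nonnegative because $\x\in\uncs$, and what remains is $\Delta_g(\x,\u)$. Your remark on attainment of the minimum is extra; as you note yourself, reading $\underline{a}$ as an infimum already gives $a(\x)\geq\underline{a}$.
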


\begin{proof}
In \cref{eqn:decomp} the DMR slack is nonnegative by \cref{eqn:dmr}, and the drift gap is nonnegative because $a(\x)\geq\underline{a}$ for any $\x\in\uncs$ by \cref{eqn:amin_def}. The remaining term is $\Delta_g(\x,\u)$.
\end{proof}

{\color{internal}Critically, \cref{cor:pointwise} isolates the lone potential failure point of the \ana: by constructing a constraint where the drift term is robust to state estimation error, the control-coefficient mismatch $\Delta_g(\x,\u)$ is the only term that can be negative.}
If $\Delta_g(\x,\u)\geq0$ then the true CBF condition holds and the true state is safe; any violation must arise from a negative $\Delta_g(\x,\u)$.
With this in mind, let
\begin{equation}
\label{eqn:M_def}
    M(\u) \triangleq \min_{\s\in\uncs}\big[a(\s) + b(\s)\u\big]
\end{equation}
denote the {true robust margin}, and let
\begin{equation}
\label{eqn:G_def}
    G(\u) \triangleq b(\hat{\x})\u - \min_{\s\in\uncs} b(\s)\u \;\geq\; 0
\end{equation}
denote the {control-authority gap}, where the inequality $G(\u)\geq0$ holds because $\hat{\x}\in\uncs$.
%: how much more authority the estimate $\hat{\x}$ claims than the worst-case state in $\uncs$ actually grants.

\begin{prop}[\ana\ Certificate]
\label{prop:cert}
For all $\u\in\mathcal{U}$,
\begin{equation}
\label{eqn:cert_bound}
    M(\u) \;\geq\; \big[\underline{a} + b(\hat{\x})\u\big] - G(\u).
\end{equation}
Consequently, if a \ana\ solution $\u^\star$ to \cref{eqn:cbf} (with constraint \cref{eqn:dmr}) satisfies $\underline{a} + b(\hat{\x})\u^\star \geq G(\u^\star)$, then $M(\u^\star)\geq0$, the CBF condition \cref{eqn:cbf_condition} holds at the true state $\x$, and the true safe set $\mathcal{C}$ is rendered forward invariant.
\end{prop}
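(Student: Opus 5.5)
The bound \cref{eqn:cert_bound} follows from the elementary fact that the minimum of a sum is at least the sum of the minima. Fix $\u\in\mathcal{U}$. For every $\s\in\uncs$ we have $a(\s)\geq\underline{a}$ by \cref{eqn:amin_def}, and $b(\s)\u\geq\min_{\s'\in\uncs} b(\s')\u$. Adding these gives $a(\s)+b(\s)\u\geq \underline{a} + \min_{\s'\in\uncs} b(\s')\u$ for every $\s\in\uncs$. Taking the minimum over $\s$ on the left yields $M(\u)\geq \underline{a} + \min_{\s\in\uncs} b(\s)\u$. Substituting $\min_{\s\in\uncs} b(\s)\u = b(\hat{\x})\u - G(\u)$ from \cref{eqn:G_def} rearranges this to exactly \cref{eqn:cert_bound}. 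I will also note that the minima in \cref{eqn:amin_def}, \cref{eqn:M_def} and \cref{eqn:G_def} are attained: $\uncs$ is a compact box, and $a,b$ are continuous because $h$ is $C^1$ and $f,g,\alpha$ are continuous. So every quantity is well defined.

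For the consequence, I will take $\u^\star$ with $\underline{a}+b(\hat{\x})\u^\star\geq G(\u^\star)$. Then \cref{eqn:cert_bound} gives $M(\u^\star)\geq 0$. Since the true state satisfies $\x\in\uncs$ by construction of the error bound, we get $a(\x)+b(\x)\u^\star\geq M(\u^\star)\geq 0$, which is the CBF inequality at the true state. Equivalently, $\Delta_g(\x,\u^\star)\geq -G(\u^\star)$ can be combined with \cref{cor:pointwise} and the decomposition of \cref{lem:decomp}; the direct route through $M$ is simpler.

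The remaining step is forward invariance, and I expect it to be the main obstacle. It requires the inequality $L_fh(\x(t))+L_gh(\x(t))\u(t)+\alpha(h(\x(t)))\geq0$ to hold along the closed-loop trajectory for all $t$, not just at one instant. I would state the implicit hypotheses explicitly. First, the certificate condition holds at every time, with $\u^\star(t)$ computed from the current $\hat{\x}(t)$ and $\uncs(t)$, and $\x(t)\in\uncs(t)$ for all $t$. Second, the closed loop admits unique (Carathéodory) solutions, for instance because the QP solution is locally Lipschitz or at least measurable and bounded in $t$. Under these hypotheses, $\dot h(\x(t))\geq-\alpha(h(\x(t)))$ almost everywhere. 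The comparison lemma, as in the standard CBF result of \cite{ames2016control}, then gives $h(\x(t))\geq\beta(h(\x_0),t)\geq0$ for $\x_0\in\mathcal{C}$, which is forward invariance of $\mathcal{C}$. The delicate part is regularity of $t\mapsto\u^\star(t)$, since $\hat{\x}$ may jump. If needed, I would handle this by treating $\u^\star(\cdot)$ as a measurable bounded input and applying the comparison lemma to the absolutely continuous function $h(\x(t))$.
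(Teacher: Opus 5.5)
Your argument is correct and follows the paper's proof. The minimum of a sum dominates the sum of minima, the definition of $G$ turns $\min_{\s\in\uncs} b(\s)\u$ into $b(\hat{\x})\u - G(\u)$, and $\x\in\uncs$ transfers $M(\u^\star)\geq 0$ to the true state before the appeal to the standard CBF argument of \cite{ames2016control}. Your explicit hypotheses for forward invariance (the certificate holding at every time along the closed loop, with well-posed solutions) make precise what the paper leaves implicit in that final step.
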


\begin{proof}
Since the minimum of a sum dominates the sum of minima, $M(\u) = \min_{\s}[a(\s)+b(\s)\u] \geq \min_{\s}a(\s) + \min_{\s}b(\s)\u = \underline{a} + \big(b(\hat{\x})\u - G(\u)\big)$, which is \cref{eqn:cert_bound}.
If the stated inequality holds at $\u^\star$ then the right-hand side of \cref{eqn:cert_bound} is nonnegative, so $M(\u^\star)\geq0$, i.e.\ $a(\s)+b(\s)\u^\star\geq0$ for all $\s\in\uncs$. As $\x\in\uncs$, this holds at the true state, which is \cref{eqn:cbf_condition}; forward invariance of $\mathcal{C}$ follows by the standard CBF argument~\cite{ames2016control}.
\end{proof}
Like $\underline{a}$, $G(\u^\star)$ is a scalar minimization of a fixed function over $\uncs$, evaluated at the solved $\u^\star$, so the \ana\ provides a cheap, \emph{a posteriori} certificate.

\subsection{Neural Measurement-Robust CBFs}
\label{sec:approach:nmrcbf}

The \ana\ requires an inner minimization and may be overly conservative due to its worst-case bound.
The NMR-CBF addresses both limitations by replacing the analytic drift correction $a(\hat\x) - \underline{a}$ with a learned, non-negative residual $\rho_\xi(\hat\x, \e_{\hat\x})$:
\begin{equation}
\label{eqn:nmr}
    L_fh(\hat\x) + L_gh(\hat\x)\,\u + \alpha(h(\hat\x))
    \;-\;\rho_\xi(\hat\x, \e_{\hat\x})
    \;\geq\; 0,
\end{equation}
where $\xi$ denotes the parameters of a feedforward network whose input is the concatenation $[\hat\x;\,\e_{\hat\x}] \in \mathbb{R}^{2n}$ and whose output is passed through a softplus head to enforce $\rho_\xi \geq 0$ for every constraint.
Conditioning on $\e_{\hat\x}$ allows a single trained network to generalize across any per-dimension uncertainty radius up to a chosen training envelope $\overline{\e}$, rather than requiring a separate model per noise level.
% This allows the NMR-CBF to handle varying sensitivity to adversarial attack, discussed in \cref{guardian:sec:numerical_results}.
While the NMR-CBF inherits no formal safety guarantee, as will be shown in \cref{sec:results} and \cref{sec:hardware}, it still demonstrates strong empirical safety performance.

% Like the \ana, the NMR-CBF tightens only the drift term and evaluates $L_gh$ at $\hat\x$; the residual control-coefficient mismatch $\Delta_g(\x,\u) = [L_gh(\x) - L_gh(\hat\x)]\,\u$ analyzed in \cref{lem:decomp} therefore applies unchanged.
% When $\rho_\xi(\hat\x, \e_{\hat\x}) \geq a(\hat\x) - \underline{a}$ pointwise the NMR constraint \cref{eqn:nmr} dominates the \ana\ constraint \cref{eqn:dmr}, and the certificate of \cref{prop:cert} transfers verbatim.
% In general $\rho_\xi$ may under- or over-approximate $a(\hat\x) - \underline{a}$, so the NMR-CBF inherits no formal safety guarantee on its own; safety is enforced empirically via the training loss.
% However, as will be shown in \cref{sec:results,sec:hardware}, it still demonstrates strong empirical safety performance.
%and, when desired, can be made a posteriori certified by composing NMR-CBF with the fallback construction of \cref{sec:dmr:hybrid} (substituting $\u^\star$ from \cref{eqn:nmr} for the \ana\ solution).

\subsubsection{Supervised Pretraining}
\label{sec:approach:nmrcbf:pretrain}

We initialize $\rho_\xi$ by regression against the \ana\ oracle.
For each constraint $i$, define
\begin{align}
\label{eqn:phi-oracle}
    \phi^\star_i(\hat\x, \e_{\hat\x})
    \triangleq \bigl(&L_fh_i(\hat\x) + \alpha(h_i(\hat\x))\bigr) \nonumber \\ 
    &-\bigl[\min_{\s\in\uncs}\bigl(L_fh_i(\s) + \alpha(h_i(\s))\bigr)\bigr],
\end{align}
clipped to $[0, \phi_{\max}]$ to guard against ill-conditioned outliers near CBF singularities.
Concretely $\phi^\star$ equals $a_i(\hat\x) - \underline{a}_i$ from \cref{sec:approach:dmrcbfs} and is computed via projected-gradient ascent with several random restarts over $\uncs$.
We then sample state estimates $\hat\x^j$ uniformly from the state space (rejecting those inside the obstacle), pair each with a per-dimension noise radius $\e^j \sim \mathrm{Uniform}([0,\overline\e])$, and minimize the per-constraint mean-squared error
\begin{equation}
\label{eqn:loss-pretrain}
    \mathcal L_{\mathrm{sup}}(\xi)
    \;=\;
    \frac{1}{N}\sum_{j=1}^{N} \bigl\|
    \rho_\xi(\hat\x^j, \e^j) - \phi^\star(\hat\x^j, \e^j)
    \bigr\|_2^2.
\end{equation}
The pretrained network already serves as a drop-in replacement for the \ana\ inner minimization, at reduced cost set by a single MLP evaluation per step.

\subsubsection{Differentiable Fine-Tuning}
\label{sec:approach:nmrcbf:finetune}

Supervised pretraining incorporates the \ana's worst-case assumption into the NMR-CBF, so it initially inherits its conservativeness.
%: $\phi^\star$ takes the worst $\s\in\uncs$ regardless of whether that worst case is dynamically reachable along the closed-loop trajectory.
We therefore fine-tune $\xi$ on differentiable closed-loop rollouts that allow $\rho_\xi$ to shrink wherever the trajectory has safety slack.%, while penalizing actual safety violations.

Given an initial state $\x_0$ sampled from the operating envelope, an error bound $\e \sim \mathrm{Uniform}\bigl([0,\overline{\e}]\bigr)$, and a bias term $\mathbf{b} \sim \mathrm{Uniform}\bigl([-\e, \e]\bigr)$, we simulate $T$ steps of the discretized closed-loop dynamics
\begin{align}
\label{eqn:finetune-rollout}
    \hat\x_t & = \x_t + \mathbf{b}, \quad \x_{t+1} = \x_t + \Delta t \,(f(\x_t) + g(\x_t)\u_t) \nonumber\\
    \u_t & = \Pi_{\mathrm{NMR}}\bigl(\hat\x_t, \e, \xi\bigr),
\end{align}
where $\Pi_{\mathrm{NMR}}$ denotes the projection of the nominal control $\u_t^{\mathrm{nom}} = k(\hat\x_t)$ onto the NMR-tightened CBF half-space \cref{eqn:nmr}, clipped to the box $[\underline{\u},\overline{\u}]$ and with gradient $\partial \x_{t+1}/\partial \xi$.
% $\Pi_{\mathrm{NMR}}$ has a closed-form expression in the single-active-constraint case, obtained by selecting at each step the most violated constraint $i^\star = \arg\max_i d_i$ with deficits $d_i = a_i\u_t^{\mathrm{nom}} + L_fh_i(\hat\x_t) + \alpha(h_i(\hat\x_t)) + \rho_{\xi,i}$ and projecting against only that row.
% The $\arg\max$ contributes zero gradient, so reverse-mode autodiff flows cleanly through both the active row's $\rho_{\xi,i^\star}$ and $\hat\x_t$, enabling end-to-end gradients $\partial \x_{t+1}/\partial \xi$.

The per-episode loss then combines three terms:
\begin{align}
    \mathcal L(\xi)
    \;=\; &
    w_s \sum_{t=1}^{T} \bigl[h(\x_t) - \delta_{\mathrm{buf}}\bigr]_-^{2}
    \;+\; w_p \sum_{t=0}^{T-1} \bigl\|\u_t - k(\hat\x_t)\bigr\|_2^{2} \nonumber \\ 
    & \;+\; w_r \sum_{t=0}^{T-1} \bigl\|\rho_\xi(\hat\x_t, \e)\bigr\|_2^{2},
    \label{eqn:loss-finetune}
\end{align}
where $[z]_- = \max(0,-z)$.
The first term penalizes squared safety violations of the \emph{true} state against the  constraint $h(\x) \geq 0$, shifted by a small buffer $\delta_{\mathrm{buf}} > 0$ so that trajectories grazing $\partial\mathcal{C}$ incur loss before crossing.
The second term keeps the filtered control close to the nominal policy $k(\hat\x)$, mirroring the CBF-QP design objective $\min\|\u - \u^{\mathrm{nom}}\|^2$ from \cref{eqn:cbf} and providing a dense, low-variance shaping signal.
The regularizer pulls $\rho_\xi$ toward zero where neither safety nor minimal-deviation pressure prevents it, recovering nominal-CBF aggression at low $\e$ while preserving the softplus-enforced $\rho_\xi \geq 0$ structure.
% We bias the $\e$ sampling toward small radii (empirically, $\e = (\eta)^5 \overline\e$ with $\eta\sim\mathrm{Uniform}[0,1]$) so the fine-tune emphasizes the regime where \ana\ is most over-conservative and the learned residual has the most room to contract.

\subsubsection{Properties}
\label{sec:approach:nmrcbf:properties}

\iffalse
The NMR-CBF has two practical advantages over the \ana.
The first is runtime cost: $\rho_\xi$ is a single MLP evaluation, replacing the per-step projected-gradient inner loop of the \ana, which makes the NMR-CBF compatible with the control rates required by the hardware experiments of \cref{sec:hardware}.
The second is reduced conservativeness. By optimizing the closed-loop trajectory-level loss \cref{eqn:loss-finetune} rather than the pointwise worst case \cref{eqn:phi-oracle}, the fine-tuned network learns to assign smaller $\rho_\xi$ where the trajectory geometry or the box of $\uncs$ does not put the worst-case state on the active constraint.
This recovers nominal-CBF behavior in benign regimes while retaining \ana-like behavior near hard constraints.
The NMR-CBF replaces the DMR-CBF inner optimization with a single MLP evaluation and reduces conservativeness by optimizing the trajectory-level loss \cref{eqn:loss-finetune} rather than the pointwise worst case \cref{eqn:phi-oracle}.
The trade-off is a loss of formal guarantees away from the training distribution, since $\rho_\xi$ has no a priori lower bound on $\phi^\star$ and the NMR-CBF constraint can therefore under-tighten relative to \ana.
Despite this gap, the experiments in \cref{sec:results} show that the fine-tuned network is empirically as safe as the \ana.
\fi
The NMR-CBF replaces the DMR-CBF inner optimization with a single MLP evaluation and reduces conservativeness by optimizing the trajectory-level loss \cref{eqn:loss-finetune} rather than the pointwise worst case \cref{eqn:phi-oracle}.
The trade-off is that the learned robustness term has no formal guarantee away from the training distribution: $\rho_\xi$ has no a priori lower bound on $\phi^\star$, and the NMR-CBF constraint can therefore under-tighten relative to \ana.
Despite this gap, the experiments in \cref{sec:results} show that the fine-tuned network achieves empirical safety comparable to \ana.

%% file: results.tex
\section{Numerical Results}
\label{sec:results}
We evaluate safety, performance, scalability, and computational cost against existing measurement-robust filters.
%This section presents simulation results demonstrating various aspects of our proposed approach.
% In \cref{sec:results:dubins}, we compare the performance of the analytically calculated \ana\ to that of the Duality CBF and R-CBF-QP originally presented in \cref{sec:prelims:pointvsset} to show how our base approach addresses the issues posed by \cref{ex:dubins}.
%%In \cref{sec:results:dbint}, we consider safety and performance (with respect to \cref{eqn:cbf}) and compare our \ana\ and NMR-CBF approaches to a comprehensive set of baselines for a planar double integrator system.
In particular, we demonstrate the scalability of our proposed methods on a 12D quadrotor model that is subject to attitude and altitude constraints in addition to the obstacle avoidance constraint.
All experiments were conducted on a computer using Ubuntu 20.04 with 24 i9-10920X CPUs, two RTX 3090 GPUs, and 64 GB of RAM.

\subsection{Planar Double Integrator}
\label{sec:results:dbint}
Consider the system with state vector $\x = [p_x, p_y, v_x, \allowbreak v_y]^\top$ and dynamics
\begin{equation}
\label{eqn:double_integrator}
    \dot{\p} = \v \qquad \dot{\v} = \u,
\end{equation}
where $\p = [p_{x}, p_{y}]^\top$, $\v = [v_{x}, v_{y}]^\top$, and $\u = [u_{x}, u_{y}]^\top \in [-1, 1]^2$.
The safety filter must avoid an obstacle with radius \SI{0.25}{m} at $\p_{obs} = [0.0, 0.0]^\top$ despite a nominal controller $k_d(\hat{\x})$ designed to drive straight towards the goal position $\p_{g} = [2.0, 0.0]^\top$ and initial conditions sampled from a region $[-2.0, 0.0]^\top \oplus [0.5, 0.25]^\top$ opposite the obstacle.
To compare our \ana\ and NMR-CBF approaches to existing methods across varying levels of state estimation uncertainty, we consider \gdn~\cite{rober2026guardian}, CBF~\cite{ames2016control}, R-CBFs~\cite{nanayakkara_safety_2025} with ${\gamma}_1={\gamma}_2$ set to 0.2 (R-CBF (0.2)) and 1.0 (R-CBF (1.0)), the adaptive R-CBF-QP~\cite{das_safe_2025}, and the Duality CBF~\cite{tan2026duality} each subject to $\e_{\hat{x}} \triangleq [\varepsilon,\ \varepsilon,\ \frac{1}{2}\varepsilon,\ \frac{1}{2}\varepsilon]^\top$ with $\varepsilon \in \{0.0,\ 0.1,\ 0.2,\ 0.3,\ 0.4,\ 0.5\}$.
For each filtering method and each $\varepsilon$, a sample of $N_s = 1{,}000$ initial states $\x_0^i$ were collected with measurements subject to bias terms $\mathbf{b}_{\hat{x}}^i \sim \mathcal{U}(-\e_{\hat{x}}, \e_{\hat{x}})$ and rolled out according to their respective policies, again with $\hat{\x}^i = \x^i + \mathbf{b}_{\hat{x}}^i$.
As metrics for comparison, we consider collision rate and time to goal, the latter of which serves as a more interpretable proxy for \cref{eqn:cbf}, i.e., how much each filter deviates from the nominal control.

% \begin{equation}
%     \dot{x} = f(x) + G(x)\,u, \qquad
%     f(x) = \begin{bmatrix} v_x \\ 0 \\ v_y \\ 0 \end{bmatrix}, \quad
%     G(x) = \begin{bmatrix} 0 & 0 \\ 1 & 0 \\ 0 & 0 \\ 0 & 1 \end{bmatrix},
% \end{equation}

% \begin{equation}
%     x = (p_x, v_x, p_y, v_y), \qquad u = (u_x, u_y) \in [-1, 1]^2.
% \end{equation}

% \begin{equation}
%     h(x) = R_{\mathrm{obs}} - \sqrt{p_x^2 + p_y^2}, \qquad R_{\mathrm{obs}} = 0.25,
% \end{equation}

% \begin{equation}
%     \psi_1(x) = -\frac{p_x v_x + p_y v_y}{\sqrt{p_x^2 + p_y^2}}
%               + \alpha_0\bigl(R_{\mathrm{obs}} - \sqrt{p_x^2 + p_y^2}\bigr),
%     \qquad \alpha_0 = 2.
% \end{equation}

% \begin{equation}
% \begin{aligned}
%     \min_{u} \quad & \tfrac{1}{2}\|u - u_{\mathrm{nom}}\|^2 \\
%     \text{s.t.} \quad & L_f\psi_1 + L_G\psi_1\,u + \alpha_{\mathrm{QP}}\,\psi_1 \le 0, \\
%     & u \in [-1, 1]^2.
% \end{aligned}
% \end{equation}
The results for the sample initial states with different safety filtering approaches are shown visually in \cref{fig:dbint_tradeoff}. % and further quantified in \cref{tab:mc_outcome_rates_dbint}.
% We compare our \ana\ and NMR-CBF approaches against a set of baselines including \gdn\ \cref{eqn:gdn}, CBF~\cref{eqn:cbf_constraint}, R-CBFs~\cref{eqn:r-cbf} with $\bar{\gamma}_1=\bar{\gamma}_2$ set to 0.2 (R-CBF (0.2)) and 1.0 (R-CBF (1.0)), the adaptive R-CBF-QP~\cite{das_safe_2025}, and the Duality CBF~\cref{eqn:duality-cbf}.
For each policy and each $\varepsilon$, each of the $1{,}000$ sampled trajectories were recorded as one of \texttt{Reached}, \texttt{Timeout}, or \texttt{Unsafe}.
For trajectories categorized as \texttt{Reached}, the safety filter successfully avoided the obstacle and the state estimate $\hat{\x}$ reached the goal position within a \SI{20}{s} maximum simulation time limit.
For trajectories categorized as \texttt{Timeout}, the safety filter successfully kept $\x$ from collision, but did not allow the system to reach the goal position within the time limit.
Finally, for trajectories categorized as \texttt{Unsafe}, the safety filter allowed the system to collide with the obstacle.
In \cref{fig:dbint_tradeoff}, for each value of $\varepsilon$, the percentage of trajectories that reached the goal is labeled and colored according to the average time it took the system to reach the goal condition, the percent timeout trajectories is gray, and the percent unsafe trajectories is black.
These results are also visualized qualitatively for $\varepsilon = 0.3$ in \cref{fig:dbint_trajectories} where unsafe state trajectories are colored black and safe trajectories are colored to show the system's speed throughout each rollout.
% As shown in \cref{tab:mc_outcome_rates_dbint}, the nominal controller does not make any attempt at obstacle avoidance and has a high collision rate, so the safety of the system is left entirely to the safety filter.
\begin{figure*}[t]
    \vspace{4pt}
    \centering
    \includegraphics[width=0.98\linewidth, trim={0 0 0 6},clip]{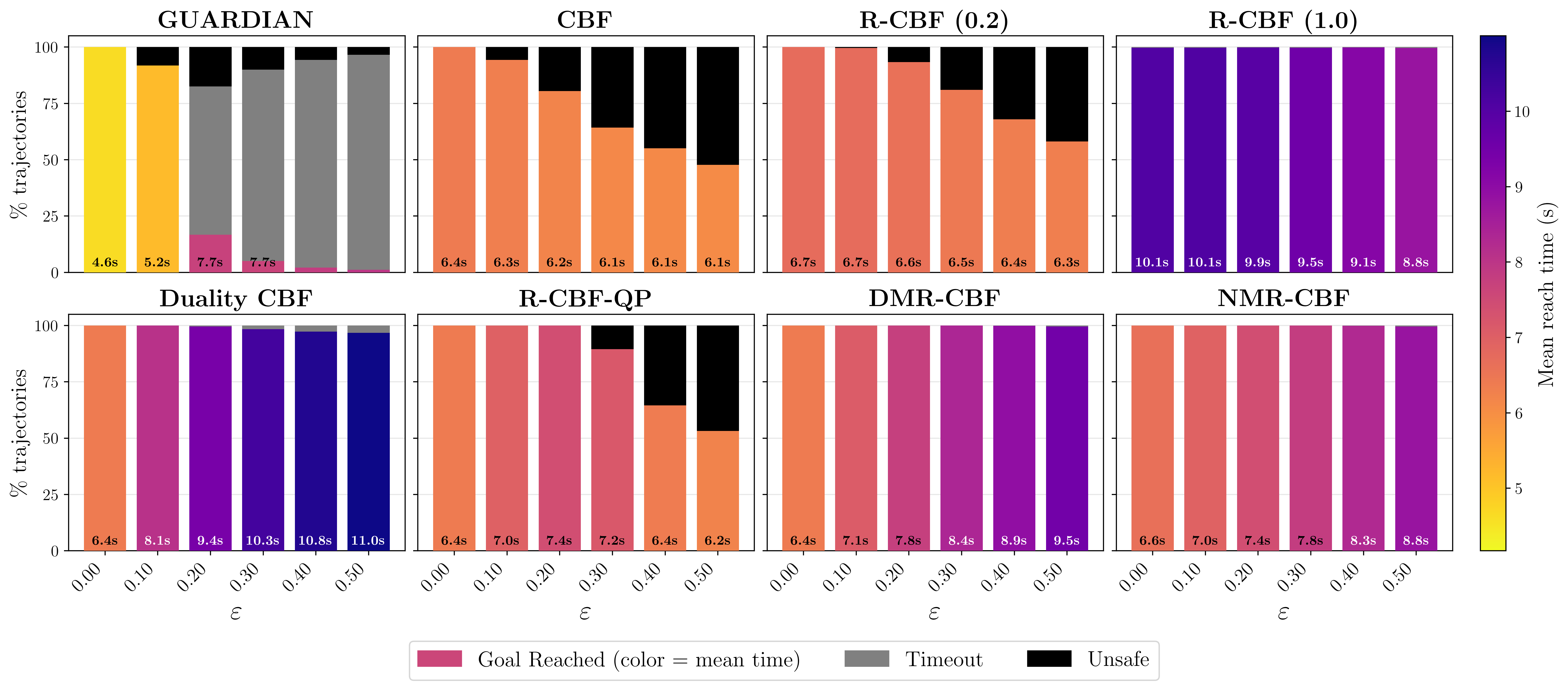}
    \caption[Bar plot of trajectory rollout results for a planar double integrator]{Results of trajectory rollouts show time to goal, safety rate, and timeout rate for each policy over a range of $\varepsilon$ values. 
    \ana\ and NMR-CBF are the only approaches that have no collisions and are not overly conservative.}
    \label{fig:dbint_tradeoff}
\end{figure*}

% \begin{table*}[t]
% \centering
% \scriptsize
% \caption[Reached$\,/\,$Slow$\,/\,$Unsafe outcome rates (\%) vs.\ uncertainty parameter $\varepsilon$ for planar integrator]{Reached$\,/\,$Slow$\,/\,$Unsafe outcome rates (\%) vs.\ uncertainty parameter $\varepsilon$}
% \label{tab:mc_outcome_rates_dbint}
% \begin{tabular}{llccccccccc}
% \toprule
% Category & Approach & \multicolumn{3}{c}{$\varepsilon=0.10$} & \multicolumn{3}{c}{$\varepsilon=0.30$} & \multicolumn{3}{c}{$\varepsilon=0.50$} \\
% \cmidrule(lr){3-5} \cmidrule(lr){6-8} \cmidrule(lr){9-11}
%  &  & Reached & Slow & Unsafe & Reached & Slow & Unsafe & Reached & Slow & Unsafe \\
% \midrule
% \multirow{2}{*}{Non-robust} & Nominal & 0.0 & 0.0 & 100.0 & 2.4 & 0.0 & 97.6 & 17.7 & 0.0 & 82.3 \\
%  & CBF & 94.3 & 0.0 & 5.7 & 64.3 & 0.0 & 35.7 & 47.7 & 0.0 & 52.3 \\
% \midrule
% \multirow{2}{*}{\shortstack[l]{Set-based\\robust}} & Duality CBF & 100.0 & 0.0 & 0.0 & 98.4 & 1.6 & 0.0 & 96.7 & 3.3 & 0.0 \\
%  & GUARDIAN & 88.9 & 2.9 & 8.2 & 3.1 & 86.9 & 10.0 & 1.0 & 95.6 & 3.4 \\
% \midrule
% \multirow{6}{*}{\shortstack[l]{Point-based\\robust}} & R-CBF ($\gamma{=}0.2$) & 99.6 & 0.0 & 0.4 & 81.0 & 0.0 & 19.0 & 58.1 & 0.0 & 41.9 \\
%  & R-CBF ($\gamma{=}1.0$) & 99.7 & 0.3 & 0.0 & 99.7 & 0.3 & 0.0 & 99.6 & 0.4 & 0.0 \\
%  & R-CBF-QP & 100.0 & 0.0 & 0.0 & 89.5 & 0.0 & 10.5 & 53.2 & 0.0 & 46.8 \\
%  & \ana & 100.0 & 0.0 & 0.0 & 100.0 & 0.0 & 0.0 & 99.6 & 0.4 & 0.0 \\
%  & NMR-CBF & 100.0 & 0.0 & 0.0 & 100.0 & 0.0 & 0.0 & 99.6 & 0.4 & 0.0 \\
% \bottomrule
% \end{tabular}
% \end{table*}

\begin{figure*}[th!]
    \centering
    % \captionsetup[subfigure]{aboveskip=2pt,belowskip=2pt}
    % \begin{subfigure}{\linewidth}
    %     \centering
    %     \includegraphics[width=0.85\linewidth]{figures/dbint_trajectories_0.1_r1.png}
    %     \caption{Trajectories with $\varepsilon = 0.1$ show all methods aside from CBF and \gdn\ maintian safety, R-CBF~(1.0) is very conservative.}
    %     \label{fig:dbint_trajectories:1}
    % \end{subfigure}
    % \begin{subfigure}{\linewidth}
    %     \centering
    \includegraphics[width=0.98\linewidth, trim={8pt 0 8pt 0},clip]{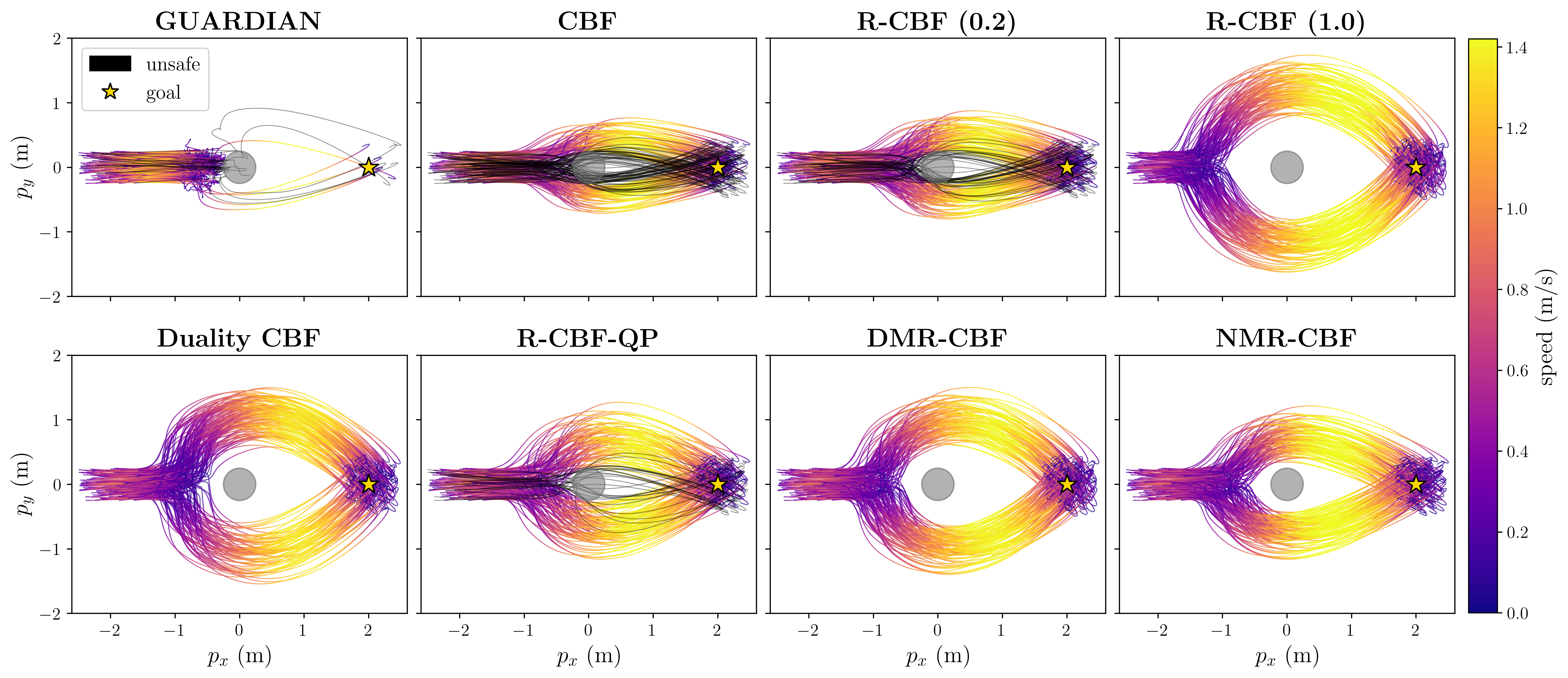}
        % \caption{Trajectories with $\varepsilon = 0.3$ show R-CBF~(0.2) and R-CBF-QP starting to fail more frequently.}
        % \label{fig:dbint_trajectories:3}
    % \end{subfigure}
    % \begin{subfigure}{\linewidth}
    %     \centering
    %     \includegraphics[width=0.85\linewidth]{figures/dbint_trajectories_0.5_r1.png}
    %     \caption{Trajectories with $\varepsilon=0.5$ show R-CBF~(0.2) and R-CBF-QP fail often; \ana\ and NMR-CBF remain safe but not as conservative as Duality CBF. 
    %     % \nr{TODO: clean up these figures a bit}
    %     }
    %     \label{fig:dbint_trajectories:5}
    % \end{subfigure}
    \caption[Sample trajectories with $\varepsilon = 0.3$ amounts of state-estimation uncertainty]{Sample trajectories with $\varepsilon = 0.3$ show R-CBF~(0.2) and R-CBF-QP starting to fail more frequently.}
    \label{fig:dbint_trajectories}
    \vspace{-12pt}
\end{figure*}

As shown by the two leftmost panels in \cref{fig:dbint_tradeoff}, the set-based Duality CBF and \gdn\ perform poorly with slower time to goal and more \texttt{Timeout} trajectories than the other approaches. % and the `Set-based robust' category in \cref{tab:mc_outcome_rates_dbint}
% For $\varepsilon = 0.0$, \gdn\ and Duality CBFs reduce to standard HJ reachability and CBF formulations, respectively, so perform well.
Despite good performance for $\varepsilon = 0.0$ (where \gdn\ and the Duality CBFs reduce to standard HJ reachability and CBF formulations, respectively), when $\varepsilon > 0$, these approaches are negatively impacted by the need to optimize over the entire set $\bar{\mathcal{X}}$, resulting in difficulty choosing which direction to go around the obstacle, as discussed in \cref{sec:prelims:prev} and \cref{sec:approach:dmrcbfs}.
In obstacle avoidance, \gdn's assumptions are not satisfied, resulting in safety violations and high timeout rates caused by an inability to decide which direction to go around the obstacle, as shown in \cref{fig:dbint_trajectories}.
The Duality CBF has similar difficulty getting around the obstacle, though not to the same degree, resulting in fewer timeouts, but high average time to goal.

The CBF, R-CBF~(0.2), and R-CBF~(1.0) each have different issues, but they all stem from the fact that these approaches do not adapt to the level of state estimation uncertainty.
The standard CBF allows an increasing number of collisions as $\varepsilon$ increases, which is unsurprising given that it does not account for state estimation error at all.
The R-CBF~(0.2) and R-CBF~(1.0) each have terms to increase their robustness, but it is not clear how to tune them to properly account for state-estimation error.
As a result, R-CBF~(0.2) is not robust to higher values of $\varepsilon$ and R-CBF~(1.0) is overly conservative, especially for low values of $\varepsilon$, resulting in slow time to goal.
Notably, the R-CBF~(1.0) approach has decreasing time to goal with increasing $\varepsilon$. 
This is because as $\varepsilon$ increases, the R-CBF~(1.0) intervenes less because the state estimates may be biased further away from the obstacle.

The R-CBF-QP performs well for low values of $\varepsilon$, recording no collisions for $\varepsilon = 0.0$, 0.1, and 0.2 with the best time to goal of any baseline.
However, as discussed in \cref{sec:prelims:prev}, its robustness guarantees deteriorate for higher levels of state uncertainty, so it fails more frequently with $\varepsilon = 0.3$, 0.4, and 0.5.
% \nr{TODO: Connect back to NFLs more thoroughly; we care about high uncertainty.}
The \ana\ has a perfect safety record and is less conservative than the Duality CBF and R-CBF~(1.0) (with the exception of R-CBF~(1.0) at $\varepsilon=0.5$), but reaches the goal slower than the R-CBF-QP for $\varepsilon = 0.1$ and 0.2.
Finally, though it loses the theoretical guarantees of the \ana, the NMR-CBF demonstrates the best performance of any approach: it has a perfect safety record and matches the time to goal of the R-CBF-QP for $\varepsilon = 0.1$, 0.2, and is at least as fast as R-CBF~(1.0) for $\varepsilon = 0.3$, 0.4, and 0.5.
Overall, our \ana\ and NMR-CBF approaches are the only approaches that successfully avoid the obstacle for all values of $\varepsilon$ without being overly conservative.

% \begin{figure}
%     \centering
%     \includegraphics[width=0.9\linewidth]{figures/dbint_trajectories_0.1_r1.png}
%     \caption{Caption}
%     \label{fig:trajectories_1}
% \end{figure}

% \begin{figure}
%     \centering
%     \includegraphics[width=0.9\linewidth]{figures/dbint_trajectories_0.3_r1.png}
%     \caption{Caption}
%     \label{fig:trajectories_3}
% \end{figure}

% \begin{figure}
%     \centering
%     \includegraphics[width=1.0\linewidth]{figures/dbint_trajectories_0.5_r1.png}
%     \caption{Caption}
%     \label{fig:trajectories_5}
% \end{figure}

\subsection{Quadrotor}
\label{sec:results:quadrotor}
Next, we investigate the scalability of our approaches with respect to state dimension by demonstrating them on a 12D quadrotor model with
\begin{equation}
\begin{split}
    \x = (p_x, p_y, p_z, v_x, v_y, v_z, \phi, \theta, \psi, p, q, r) \in \mathbb{R}^{12},
    \\ \u = (u_T, u_{\tau_x}, u_{\tau_y}, u_{\tau_z}) \in [-1, 1]^4,
\end{split}
\end{equation}
where $\p = [p_x, p_y, p_z]^\top$, $\v=[v_x, v_y, v_z]^\top$, $\phi$, $\theta$, and $\psi$ represent the roll, pitch, and yaw, respectively, and $p$, $q$, and $r$ represent their respective rates.
The inputs $u_T$, $u_{\tau_x}$, $u_{\tau_y}$, and $u_{\tau_z}$ represent the collective thrust and the body-frame torques about the roll ($x$), pitch ($y$), and yaw ($z$) axes, respectively.
% The full dynamics for the quadrotor are provided in \cref{appb:quadrotor_dynamics}.
As in \cref{sec:results:dbint}, the quadrotor must fly to a goal position while avoiding an obstacle with radius \SI{0.25}{m} positioned at the origin. 
In addition to the obstacle avoidance constraint, the quadrotor is subject to attitude constraints and floor/ceiling altitude constraints.
To ensure safety, a separate CBF condition is imposed for each constraint, amplifying the importance of the computational efficiency of each approach.

We repeat the evaluation of \cref{sec:results:dbint}, $N_s = 1{,}000$ trajectories, but \gdn\ is omitted because its HJ formulation does not scale to the 12D state space.
Otherwise, many of the same trends shown in \cref{fig:dbint_tradeoff} were observed, a snapshot of which are presented numerically in \cref{tab:mc_outcome_rates_quad3d}.
As shown in \cref{tab:mc_outcome_rates_quad3d}, each baseline safety filter allows more failures as $\varepsilon$ increases.
This includes the R-CBF (1.0), which is a notable difference from the results presented in \cref{sec:results:dbint}, demonstrating the challenge of manually setting R-CBF $\gamma$ values.
As in \cref{fig:dbint_tradeoff}, the NMR-CBF additionally matches or outperforms the other approaches in the time-to-goal metric, leaving our \ana\ and NMR-CBF as the only approaches with a perfect safety record without being overly conservative.
\begin{table}[t]
\vspace{4pt}
\centering
\scriptsize
\setlength{\tabcolsep}{4pt}
\caption{Time-to-goal (s) and Timeout$\,/\,$Unsafe outcome rates (\%) vs.\ noise parameter $\varepsilon$ on the 12D quadrotor.}
\label{tab:mc_outcome_rates_quad3d}
\begin{tabular}{@{}llcccccc@{}}
\toprule
Category & Approach & \multicolumn{3}{c}{$\varepsilon=0.10$} & \multicolumn{3}{c}{$\varepsilon=0.30$} \\
\cmidrule(lr){3-5} \cmidrule(lr){6-8}
 &  & Time & Timeout & Unsafe & Time & Timeout & Unsafe \\
\midrule
\multirow{2}{*}{Non-robust} & Nominal & --- & 0.0 & 100.0 & 4.33 & 0.0 & 81.9 \\
 & CBF & 4.50 & 0.0 & 0.1 & 4.50 & 0.0 & 17.7 \\
\midrule
\multirow{1}{*}{Set-based} & Duality CBF & 4.59 & 3.0 & 0.0 & 4.68 & 2.0 & 0.0 \\
\midrule
\multirow{6}{*}{Point-based} & R-CBF (0.2) & 4.50 & 0.0 & 0.0 & 4.51 & 0.0 & 15.5 \\
 & R-CBF (1.0) & 4.54 & 0.0 & 0.0 & 4.54 & 0.0 & 3.8 \\
 & R-CBF-QP & 4.50 & 0.0 & 0.0 & 4.51 & 0.0 & 15.7 \\
 & DMR-CBF & 4.53 & 0.0 & 0.0 & 4.58 & 0.0 & 0.0 \\
 % & NMR-CBF & 4.53 & 0.0 & 0.0 & 4.59 & 0.0 & 0.0 \\
 & NMR-CBF & 4.50 & 0.0 & 0.0 & 4.56 & 0.0 & 0.0 \\
\bottomrule
\end{tabular}
\ifshrink
\vspace{-12pt}
\fi
\end{table}

Moreover, as shown in \cref{tab:ms-per-step}, the NMR-CBF adds little computational overhead relative to a standard CBF, requiring 0.31 ms versus 0.26 ms per step for the planar DI and 0.57 ms versus 0.47 ms for the quadrotor. The DMR-CBF is more expensive due to its online inner optimization, but remains substantially faster than the Duality CBF, particularly for the 12D quadrotor (2.27 ms versus 45.28 ms per step).
\iffalse
Moreover, as shown in \cref{tab:ms-per-step}, both \ana s and NMR-CBFs have competitive computation times across both examples.
The NMR-CBF does not have much extra computational overhead when compared with the normal CBF since it is just a normal CBF with an inference call to the NN $\rho_{\xi}$.
The \ana\ is a bit slower since it must conduct the optimization \cref{eqn:amin_def} online, but it is still much faster than the Duality CBF, which must construct a polytopic over-approximation of the dual image set, requiring a separate inner optimization per polytope facet~\cite{tan2026duality}.
% Notably, \gdn\ is the fastest safety filtering approach since its derivative-free optimization does not require any iteration, but does not scale to 12D.
\fi
% \begin{table}[th]
%   \centering
%   \caption{Per-step computation time (ms/step).}
%   \label{tab:ms-per-step}
%   \begin{tabular}{lcc}
%     \toprule
%     Method        & Planar DI & Quadrotor \\
%     \midrule
%     Nominal       & 0.007 & 0.009  \\
%     CBF           & 0.263 & 0.471  \\
%     R-CBF (0.2)   & 0.264 & 0.505  \\
%     R-CBF (1.0)   & 0.311 & 0.588  \\
%     R-CBF-QP      & 0.300 & 0.544  \\
%     Duality CBF   & 2.730 & 45.282 \\
%     GUARDIAN      & 0.254 & ---    \\
%     \ana          & 0.609 & 2.266  \\
%     NMR-CBF       & 0.309 & 0.574  \\
%     \bottomrule
%   \end{tabular}
% \end{table}

\begin{table}[th]
  \centering
  \caption{Per-step computation time (ms/step).}
  \label{tab:ms-per-step}
  \begin{tabular}{lcc}
    \toprule
    Method        & Planar DI & Quadrotor \\
    \midrule
    Nominal       & 0.01 & 0.01  \\
    CBF           & 0.26 & 0.47  \\
    R-CBF (0.2)   & 0.26 & 0.51  \\
    R-CBF (1.0)   & 0.31 & 0.59  \\
    R-CBF-QP      & 0.30 & 0.54  \\
    Duality CBF   & 2.73 & 45.28 \\
    GUARDIAN      & 0.25 & ---    \\
    \ana          & 0.61 & 2.27  \\
    NMR-CBF       & 0.31 & 0.57  \\
    \bottomrule
  \end{tabular}
  \ifshrink
    \vspace{-12pt}
  \fi
\end{table}

% \begin{table}[t]
%   \centering
%   \caption{Per-step computation time (ms/step).}
%   \label{tab:ms-per-step}
%   \begin{tabular}{lrrr}
%     \toprule
%     Method       & Dubins Car & Planar DI & Quadrotor \\
%     \midrule
%     Nominal      & 0.021 & 0.007 & 0.009  \\
%     CBF          & 0.437 & 0.263 & 0.471  \\
%     R-CBF (0.2)  & 0.423 & 0.264 & 0.505  \\
%     R-CBF (1.0)  & 0.513 & 0.311 & 0.588  \\
%     R-CBF-QP     & 0.442 & 0.300 & 0.544  \\
%     Duality CBF  & 3.805 & 2.730 & 45.282 \\
%     GUARDIAN     & 0.075 & 0.254 & ---    \\
%     \ana         & 0.758 & 0.609 & 2.266  \\
%     NMR-CBF      & 0.494 & 0.309 & 0.574  \\
%     \bottomrule
%   \end{tabular}
% \end{table}

\section{Hardware Experiments: Quadruped}
\label{sec:hardware}
To validate our approach on hardware, we deployed it on a Unitree Go2 EDU (\cref{fig:go2}) equipped with an NVIDIA Jetson Orin Nano 8GB onboard computer, motor encoders, and an IMU.
We model the Go2 as a planar DI~\cref{eqn:double_integrator}, allowing us to use a formulation similar to \cref{sec:results:dbint} and apply the feedback linearization strategy discussed in \cite{d1992dynamic} to transform the DI inputs into commands for the Go2's \texttt{sportmode} API.
% We use a feedback linearization strategy modeling the the Go2 as a Dubins car, but controlled via a virtual control point modeled as a planar double integrator~\cref{eqn:double_integrator} with inputs $u_x,u_y$.
% Specifically, the feedback linearization approach discussed in \cite{d1992dynamic} is used to transform the virtual control point's $u_x,u_y$ inputs into speed and turning commands $v, \omega$ commands for a Dubins car model, which can be used to control the Go2 via Unitree's \texttt{sportmode} API.
\begin{figure}[tp]
    \vspace{4pt}
    \centering
    \includegraphics[width=0.8\linewidth]{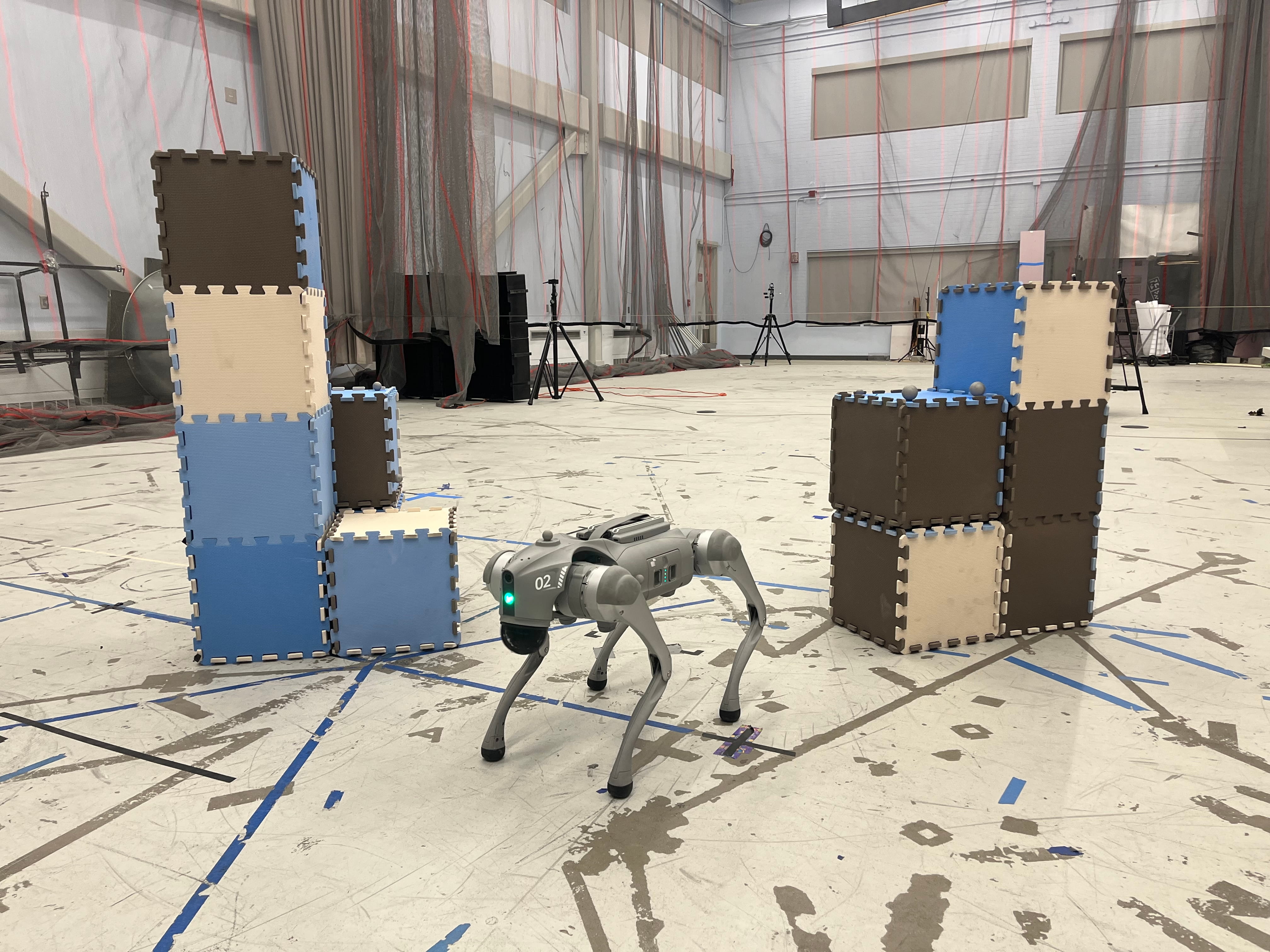}
    \caption{A Unitree Go2 in an experimental hardware space.}
    \label{fig:go2}
    % \vspace{-12pt}
\end{figure}

\begin{figure}
    \centering
    \includegraphics[width=0.98\linewidth]{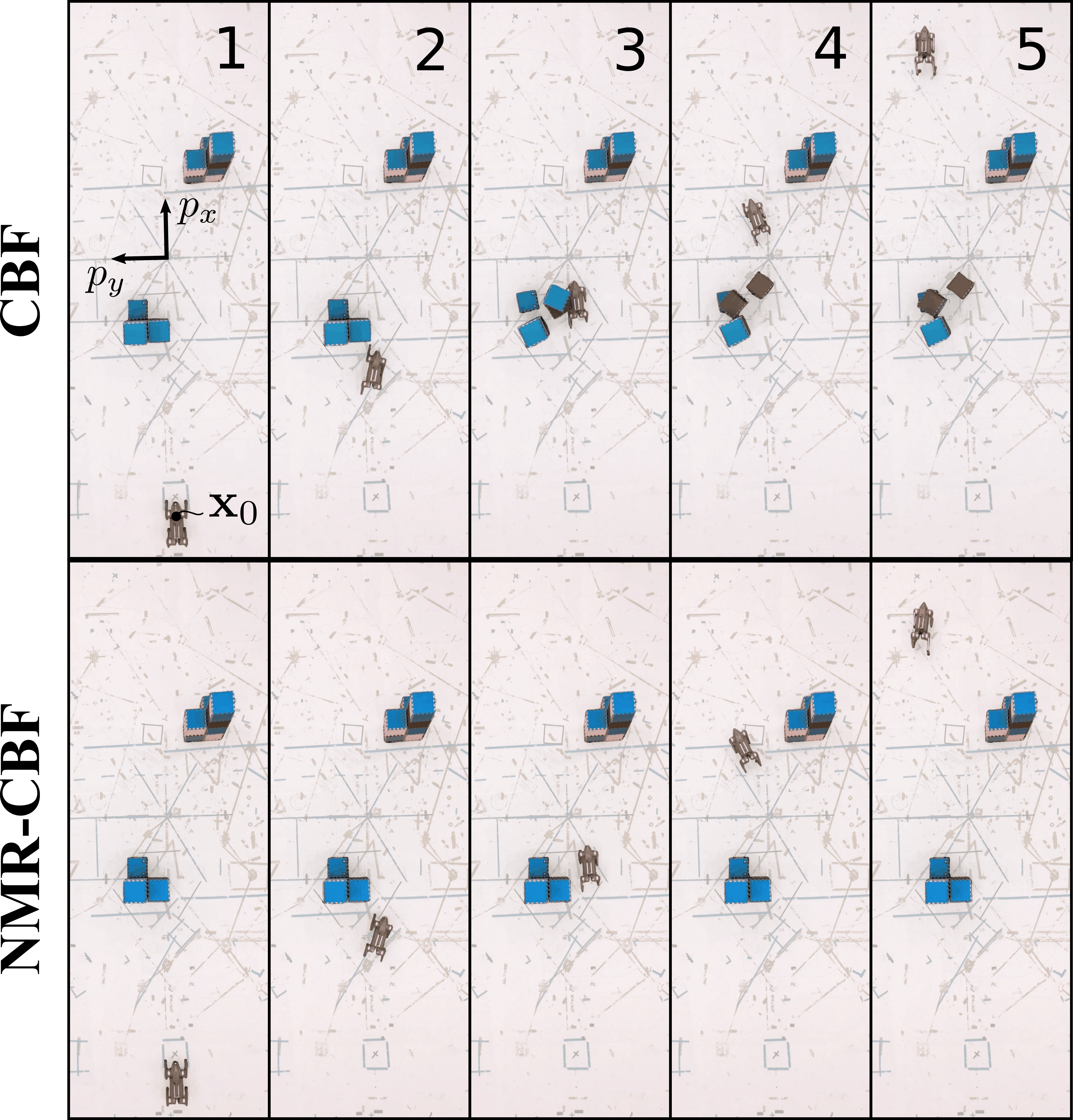}
    \caption{Hardware experiment comparing a standard CBF and NMR-CBF navigating obstacle field from similar initial conditions.
    The CBF does not account for state estimation error (artificially injected bias and odometry drift), resulting in collision with an obstacle.
    The NMR-CBF accounts for possible state estimation errors to avoid both obstacles.}
    \label{fig:go2_frames}
    \ifshrink
        \vspace{-16pt}
    \fi
\end{figure}

\begin{figure*}[tp]
    \centering
    \includegraphics[width=0.94\linewidth]{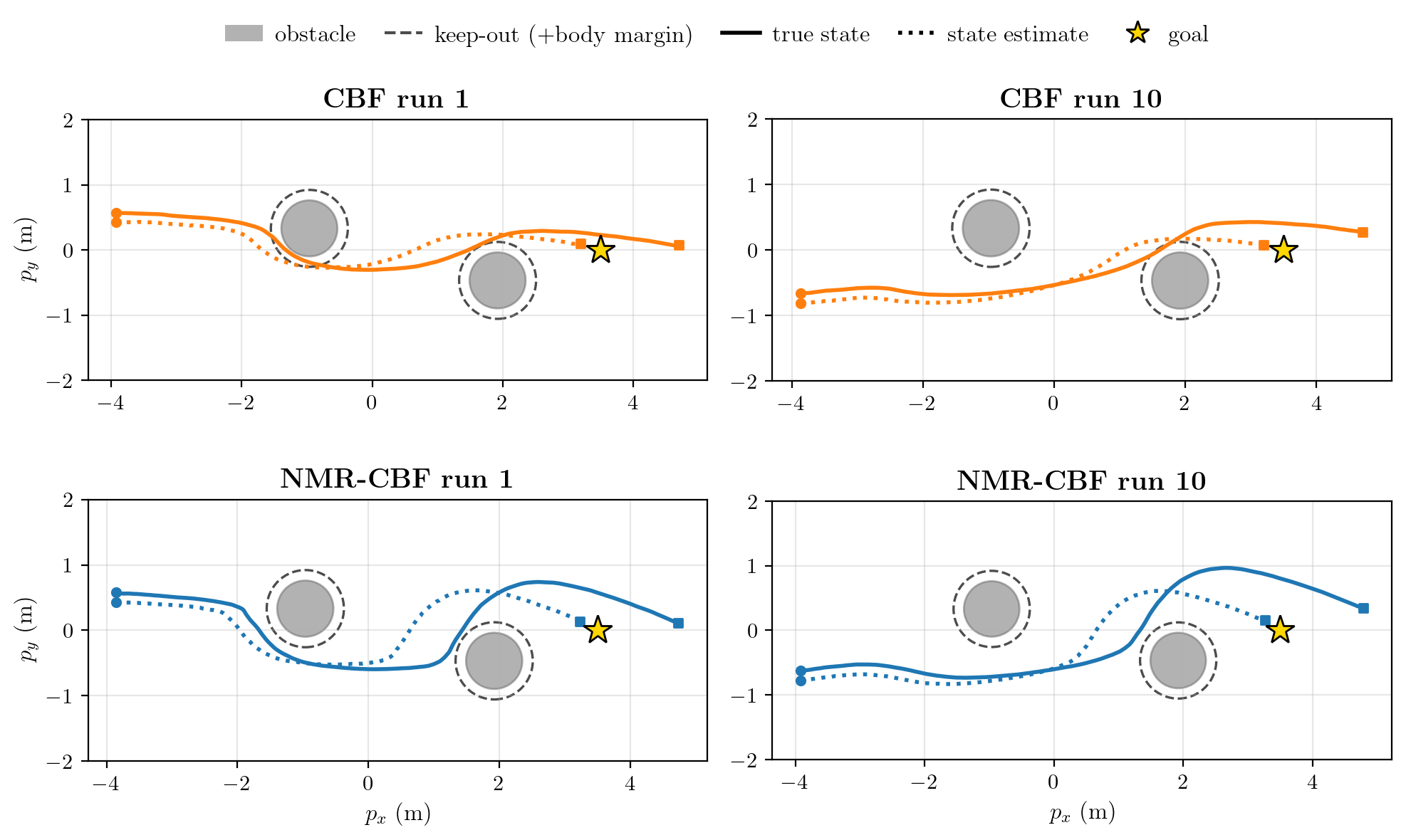}
    \caption[Go2 trajectories with CBF and NMR-CBF safety filters]{Go2 trajectories with CBF and NMR-CBF safety filters from different initial conditions.
    For the CBF (top), the state estimates (dotted orange) show the CBF allows the true state (solid orange) to collide with Obstacle 1 (top left) and Obstacle 2 (top right) depending on the initial configuration.
    The NMR-CBF (bottom) accounts for possible error in the state estimate, keeping the true state safe.}
    \label{fig:go2_trajectories}
    \ifshrink
        \vspace{-16pt}
    \fi
\end{figure*}

% {\color{red} use fig 6.9 from thesis? and can produce the corresponding five-frame NMR-CBF sequence from approximately the same initial condition? }

Our experimental setup consists of navigating from an initial position to a goal position while traversing a space with two obstacles, as originally presented in \cref{fig:go2_nmr_composite} and again in \cref{fig:go2_frames}.
% Ten trials, with initial conditions linearly spaced along a range of $p_y$ values, were conducted for each of the CBF and NMR-CBF safety filtering approaches.
At the beginning of each experiment, the ground-truth position of the Go2 was recorded using a Vicon motion-capture system and used to calibrate the Go2's estimate of the obstacles' relative position.
The Go2 was then commanded to drive towards the goal position using only its onboard state estimate, which is produced by the robot's proprioceptive odometry estimate and is not corrected by any external positioning system.
As a result, any drift in the Go2's state estimate introduces error in its position relative to the true position of the obstacles, leaving it susceptible to collision.
In test runs prior to our experiments, this drift was found to be significant, with the Go2 consistently underestimating the distance it traveled by up to 18\%.
In addition to error from drift, we introduced a negative \SI{15}{cm} bias into the Go2's estimate of $p_y$, meaning its state estimate is right of the true state.
As shown by the time series in \cref{fig:go2_frames}, these errors in the Go2's state estimate caused a standard CBF to allow collision with obstacles, while the NMR-CBF successfully protected the system. 

% We present our results through three complementary views.
% \cref{fig:go2_cbf_series,nmrcbfs:fig:go2_nmr_composite} provide visual intuition for two outcomes: \cref{fig:go2_cbf_series} shows a representative CBF run in which the Go2 collides with Obstacle 1, while \cref{fig:go2_nmr_composite} shows the NMR-CBF safely traversing the same configuration, overlaid with a sketch of the true and estimated state
% trajectories.
% These demonstrations were recorded separately from the quantitative trials and are illustrative; the sketched trajectories in \cref{fig:go2_nmr_composite}, in particular, are not measured data. 

% \begin{figure}[tp]
%     \centering
%     \includegraphics[width=1.0\linewidth]{figures/go2_cbf_r1.png}
%     \caption[Time series shows a CBF fails to protect the Go2 with state estimation error]{Time series shows a CBF fails to protect the Go2 from collision.
%     The presence of a negative bias in the estimate of $p_y$ causes the CBF to act as if the robot is further to the right, thus issuing an insufficient corrective action and colliding with Obstacle 1.}
%     \label{fig:go2_cbf_series}
% \end{figure}

% \begin{figure}[tp]
%     \centering
%     \includegraphics[width=1.0\linewidth]{figures/go2_nmr_r3.png}
%     \caption[Composite image of the Go2 using the NMR-CBF]{A composite image shows the NMR-CBF safely navigating the obstacles, despite the same estimation errors present in \cref{fig:go2_cbf_series}.}
%     \label{fig:go2_nmr_composite}
% \end{figure}

% The results of the ten trials per method are summarized in \cref{tab:go2_safety}.
To thoroughly evaluate our approach, ten trials, with initial conditions linearly spaced along a range of $p_y$ values, were conducted for each of the CBF and NMR-CBF safety filtering approaches.
\cref{tab:go2_safety} summarizes the collisions and safety rates across all ten trials per method and  \cref{fig:go2_trajectories} shows the measured ground-truth and estimated trajectories for the leftmost and rightmost initial conditions in the $p_y$ range under both the CBF and NMR-CBF filters.
Across all trials, the Go2's onboard estimate both lags and sits to the right of its true position: behind due to odometry drift (the Go2 under-travels by up to 18\%), and to the right due to the injected \SI{15}{cm} bias in $p_y$.
Because the CBF trusts this estimate directly, it mislocates the obstacles relative to the robot and, in several trials, fails to issue a sufficient correction.
Two distinct failure modes appear, both visible in the top row of \cref{fig:go2_trajectories}.
For initial conditions on the left of the obstacle configuration (e.g., CBF run~1, top-left panel), the rightward bias places the estimate on the far side of the first obstacle from the true state, so the CBF under-corrects and the ground-truth state enters the keep-out region (the obstacle inflated to account for the Go2's body) and collides with the first obstacle.
For initial conditions on the right (e.g., CBF run~10, top-right panel), the rightward bias alone would steer the robot clear of the second obstacle; however, the large along-track drift means the Go2 is much further along its path than its estimate indicates, and it collides with the second obstacle.

\begin{table}[t]
\vspace{4pt}
\centering
\caption{Collisions and safety rate over 10 hardware trials per method.
A collision is recorded when the ground-truth robot center crosses the keep-out boundary.}
\label{tab:go2_safety}
\begin{tabular}{lccc}
    \toprule
    & \multicolumn{2}{c}{Collisions} & \\
    \cmidrule(lr){2-3}
    Method & Obstacle 1 & Obstacle 2 & Safety rate \\
    \midrule
    CBF     & 1 & 2 & 7/10 \\
    NMR-CBF & 0 & 0 & 10/10 \\
    \bottomrule
\end{tabular}
\ifshrink
    \vspace{-18pt}
\fi
\end{table}

In contrast, the NMR-CBF remains safe across all initial conditions, as shown in \cref{fig:go2_nmr_composite} and the bottom row of \cref{fig:go2_trajectories}.
By considering $\e_{\hat{x}}$, which was configured to dynamically cover both the injected bias and the drift error as it grows over each trial, the NMR-CBF's learned robustness term properly accounts for the error in the state estimate.
As shown in the bottom left panel of \cref{fig:go2_trajectories}, the NMR-CBF allows the state estimate to get closer to Obstacle 1 than Obstacle 2 since there is less drift accumulated early in the trajectory.
The NMR-CBF therefore successfully keeps the ground-truth state away from both obstacles, despite the presence of the same estimation errors that cause the CBF to fail.

%% file: conclusion.tex
\section{Conclusion}
\label{sec:conclusion}
This paper presented two CBF-based formulations for safe obstacle avoidance under state estimation error.
First, the DMR-CBF provides a CBF formulation with an
{\color{internal}\emph{a posteriori}} safety certificate, but requires
an inner optimization and is more conservative than our second
approach: the NMR-CBF.
The NMR-CBF uses data generated by the DMR-CBF to pretrain a learned robustness term that is then fine-tuned using trajectory rollouts.
Numerical results show that both the DMR-CBF and NMR-CBF offer significant {\color{internal}empirical} safety improvements over all baselines, especially in high-uncertainty regimes.
Hardware experiments on a quadruped robot further demonstrate the practical utility of NMR-CBFs, protecting the system against the odometry drift naturally present in the onboard state estimation module.